\documentclass[twocolumn,
showpacs,preprintnumbers,amsmath,amssymb]{revtex4}

\usepackage{amssymb,amsmath,amsthm}
\usepackage{graphicx}
\newtheorem{Thm}{Theorem}

\newtheorem{Lem}{Lemma}

\theoremstyle{definition}

\newcommand{\bra}[1]{{\left\langle #1 \right|}}
\newcommand{\ket}[1]{{\left| #1 \right\rangle}}

\newcommand{\C}{\mbox{$\mathbb C$}}

\newcommand{\T}{\mbox{$\mathrm{tr}$}}

\begin{document}
\title{Tsallis entropy and general polygamy of multi-party quantum entanglement in arbitrary dimensions}
\author{Jeong San Kim}
\email{freddie1@khu.ac.kr} \affiliation{
 Department of Applied Mathematics and Institute of Natural Sciences, Kyung Hee University, Yongin-si, Gyeonggi-do 446-701, Korea
}
\date{\today}

\begin{abstract}
We establish a unified view to the polygamy of multi-party quantum entanglement in arbitrary dimensions.
Using quantum Tsallis-$q$ entropy, we provide a one-parameter class of polygamy inequalities
of multi-party quantum entanglement. This class of polygamy inequalities
reduces to the known polygamy inequalities based on tangle and
entanglement of assistance for a selective choice of the parameter $q$.
We further provide one-parameter generalizations of various quantum correlations
based on Tsallis-$q$ entropy. By investigating the properties of the generalized quantum correlations,
we provide a sufficient condition, on which the Tsallis-$q$ polygamy inequalities hold in multi-party
quantum systems of arbitrary dimensions.
\end{abstract}

\pacs{
03.67.Mn,  
03.65.Ud 
}
\maketitle

\section{Introduction}

Quantum entanglement is a quintessential manifestation of quantum mechanics
revealing the fundamental insights into the nature of quantum correlations.
One distinct property of quantum entanglement from other classical correlations
is its limited shareability in multi-party quantum systems, known as the
{\em monogamy of entanglement}(MoE)~\cite{T04, KGS}.

MoE was characterized in a quantitative way as an inequality;
for a given three-party quantum state $\rho_{ABC}$ with reduced density matrices $\rho_{AB}=\T_C \rho_{ABC}$ and
$\rho_{AC}=\T_B \rho_{ABC}$, and a bipartite entanglement measure $E$, {\em monogamy inequality} leads
\begin{align}
E\left(\rho_{A|BC}\right)\geq E\left(\rho_{A|B}\right)+E\left(\rho_{A|C}\right)
\label{MoE}
\end{align}
where $E\left(\rho_{A|BC}\right)$ is the bipartite entanglement between subsystems $A$ and $BC$.
Monogamy inequality shows the mutually exclusive relation of the bipartite entanglement
between $A$ and each of $B$ and $C$(measured by $E\left(\rho_{A|B}\right)$ and $E\left(\rho_{A|C}\right)$, respectively),
so that their summation cannot exceeds the total entanglement between $A$ and $BC$(measured by $E\left(\rho_{A|BC}\right)$ ).

Monogamy inequality was first proven for three-qubit systems using {\em tangle}
as the bipartite entanglement measure~\cite{ckw}, and generalized into multi-qubit systems in terms of various
entanglement measures~\cite{ov, KSRenyi, KimT, KSU}. For a general monogamy
inequality of multi-party quantum entanglement in arbitrary dimension, it was shown that
squashed entanglement~\cite{CW04} is a faithful entanglement measure~\cite{BCY10}, which also shows
a general monogamy inequality~\cite{KW}.

Whereas MoE is about the limited shareability of bipartite
entanglement in multi-party quantum systems, the {\em assisted entanglement},
which is a dual amount to bipartite entanglement measures, is known to have a dually monogamous (thus polygamous)
property in multi-party quantum systems. Moreover, this dually monogamous property of multi-party quantum entanglement
was also characterized as a dual monogamy inequality(thus polygamy inequality)~\cite{GMS},
\begin{align}
\tau_a\left(\rho_{A|BC}\right)\le\tau_a\left(\rho_{A|B}\right)
+\tau_a\left(\rho_{A|C}\right),
\label{PoE}
\end{align}
for a three-qubit state $\rho_{ABC}$, where $\tau_a\left(\rho_{A|BC}\right)$ is the tangle of assistance
of $\rho_{ABC}$ with respect to the bipartition between $A$ and $BC$.
Later, Inequality~(\ref{PoE}) was generalized into multi-qubit systems as well as some class of
higher-dimensional quantum systems~\cite{GBS, KimT}. A general polygamy inequality of multi-party quantum entanglement in arbitrary
dimensional quantum systems was established using entanglement of assistance~\cite{BGK, KimGP}.

As a one-parameter generalization of von Neumann entropy, Tsallis-$q$ entropy~\cite{tsallis, lv}
is used in many areas of quantum information theory; Tsallis entropy provides some conditions
for separability of quantum states~\cite{ar,tlb,rc}, and it is used characterize
classical statistical correlations inherented in quantum states~\cite{rr}.
There are also discussions about using the non-extensive statistical
mechanics to describe quantum entanglement in terms of Tsallis entropy~\cite{bpcp}.

Tsallis entropy also plays an important role in quantum entanglement theory.
For all parameters $q > 0$, Tsallis-$q$ entropy is a concave function on the set of density matrices,
which assures the property of {\em entanglement monotone}~\cite{vidal}.
In other words, Tsallis entropy can be used to construct a faithful entanglement measure that does not increase under
{\em local quantum operations and classical communication}(LOCC).

Here, we establish a unified view to polygamy inequalities of multi-party quantum entanglement in terms of Tsallis-$q$ entropy.
Using a class of bipartite entanglement measures, {\em Tsallis-$q$ entanglement}
as well as its dual quantities {\em Tsallis-$q$ entanglement of assistance}, we provide a
one-parameter class of polygamy inequalities in multi-party quantum systems of arbitrary dimensions.

This class of polygamy inequalities is reduced to the known polygamy inequalities based on tangle and
entanglement of assistance for a selective choice of the parameter $q$.
Thus our class of polygamy inequalities provides an interpolation among
various polygamy inequalities of multi-party quantum entanglement.

We further provide one-parameter generalizations of various quantum correlations
based on Tsallis-$q$ entropy. By investigating the properties of the generalized quantum correlations,
we provide a sufficient condition, on which the Tsallis-$q$ polygamy inequality holds in multi-party
quantum systems of arbitrary dimensions. Moreover, we show that the sufficient condition we provide here is guaranteed
for the polygamy inequality based on entanglement of assistance.
Thus our results also encapsulate the known results of general polygamy inequality in a unified view in terms of
Tsallis-$q$ entropy.

This paper is organized as follows. In Sec.~\ref{Sec: Tentropy},
we recall the definition of Tsallis-$q$ entropy, and provide some generalize entropic properties in terms of
Tsallis-$q$ entropy. In Sec.~\ref{Subsec: Tentroty}, we recall the definitions of Tsallis-$q$ entanglement
as well as its dual quantity, Tsallis-$q$ enatnglement of assistance(TEoA), and we briefly review the monogamy and polygamy inequalities in multi-party quantum systems
based on generalized entropies in Sec.~\ref{Subsec: monopolyge}.
In Sec.~\ref{Subsec: unification}, we provide a unified view of general polygamy inequality of multi-party quantum entanglement using TEoA.
In Sec.~\ref{sec: q-exp}, we generalize various quantum correlations
such as Holevo quantity, one-way unlocalizable entanglement and quantum mutual information into one-parameter classes
with respect to the parameter $q$. In Sec.~\ref{sec: ccq}, we consider a classical-classical-quantum state in
four-party quantum systems, and investigate its properties related with the generalized quantum correlations in the previous section.
In Sec.~\ref{sec: gpoly}, we show some sufficient condition for the general polygamy inequality of multi-party quantum entanglement
in arbitrary dimensions using Tsallis-$q$ entropy, and we summarize our results in Sec.~\ref{Conclusion}.

\section{Tsallis-$q$ Entropy}
\label{Sec: Tentropy}

Based on the generalized logarithmic function with respect to the parameter $q$ with $q > 0,~q \ne 1$,
\begin{eqnarray}
\ln _{q} x &=&  \frac {x^{1-q}-1} {1-q},
\label{qlog}
\end{eqnarray}
Tsallis-$q$ entropy (or Tsallis entropy of order $q$) for a probability distribution $\mathbf{P}=\{p_i\}$
is defined as
\begin{align}
H_{q}\left(\mathbf{P}\right)=-\sum_{i}p_{i}^q \ln _{q}p_i = \frac{1}{1-q}\left[\sum_{i}p_{i}^q -1\right],
\label{Ctsallis}
\end{align}
which takes the {\em $q$-expectation} of the generalized logarithmic function with respect to the probability distribution~\cite{tsallis}.
As the singularity at $q=1$ in Eq.~(\ref{qlog}) is removable by its limit value, which is the natural logarithm $\ln x$,
Tsallis-$q$ entropy in Eq.~(\ref{Ctsallis}) converges to Shannon entropy when $q$ tends to $1$,
\begin{equation}
\lim_{q\rightarrow 1}H_{q}\left(\mathbf{P}\right)=-\sum_{i}p_i \ln p_i=H\left(\mathbf{P}\right).
\label{tsallishannon}
\end{equation}

By replacing the probability distribution $\mathbf{P}$ with a density matrix $\rho$,
quantum Tsallis-$q$ entropy is defined as
\begin{align}
S_{q}\left(\rho\right)=-\T \rho ^{q} \ln_{q} \rho = \frac {1-\T\left(\rho ^q\right)}{q-1}
\label{Qtsallis}
\end{align}
for $q > 0,~q \ne 1$~\cite{lv}.
Similarly, quantum Tsallis-$q$ entropy converges to von Neumann entropy
when $q$ tends to $1$,
\begin{equation}
\lim_{q\rightarrow 1}S_{q}\left(\rho\right)=-\T\rho \ln \rho=S\left(\rho\right).
\end{equation}
For these reasons, we simply denote $S_{1}\left(\rho\right)=S\left(\rho\right)$, and thus
Tsallis-$q$ entropy is a one-parameter generalization of von Neumann entropy with respect to the parameter $q$.

It is noteworthy that Tsallis-$q$ entropy is a nonextensive generalization of von Neumann entropy. Whereas von Neumann entropy has the
{\em extensivity} (or additivity) property, that is, the joint entropy of a pair of independent systems $\rho \otimes \sigma$ is equal to the
sum of the individual entropies
\begin{align}
S\left(\rho\otimes \sigma\right)=S\left(\rho\right)+S\left(\sigma\right),
\label{pseudoadd}
\end{align}
this extensivity no longer holds for Tsallis-$q$ entropy, unless $q=1$.
Instead, Tsallis-$q$ entropy has so-called {\em pseudoadditivity } relation as
\begin{align}
S_{q}\left(\rho\otimes \sigma\right)=S_{q}\left(\rho\right)+S_{q}\left(\sigma\right)+\left(1-q\right)S_{q}\left(\rho\right)S_{q}\left(\sigma\right)
\label{nonext}
\end{align}
for $q \geq 0$.

The following lemma shows that the idea of $q$-expectation naturally generalizes some entropic property in terms of
Tsallis-$q$ entropy.
\begin{Lem}(Joint entropy theorem)
For a probability distribution $\mathbf{P}=\{p_i\}$, a set of density operators $\{\rho^i_A\}$ of a system $A$
and a set of orthogonal states $\{\ket{i}_B\}$ of another system $B$, we have
\begin{align}
S_q\left(\sum_{i}p_i\rho_A^i \otimes \ket{i}_B\bra{i}\right)=\sum_{i}p_{i}^{q}S_q \left(\rho_A^i\right)+H_q\left(\mathbf{P}\right),
\label{eq: q_joint}
\end{align}
for $q \geq 0$ and $q\neq 1$.
\label{Lem: qjoint}
\end{Lem}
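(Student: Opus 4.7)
The plan is a direct computation from the definition $S_q(\rho)=\frac{1-\T(\rho^q)}{q-1}$, exploiting the block-diagonal structure induced by the orthogonality of $\{\ket{i}_B\}$.

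First I would observe that, because the states $\ket{i}_B$ are mutually orthogonal, the operators $p_i\rho_A^i\otimes\ket{i}_B\bra{i}$ live on mutually orthogonal subspaces of $\mathcal{H}_A\otimes\mathcal{H}_B$, so the total state $\rho_{AB}=\sum_i p_i\rho_A^i\otimes\ket{i}_B\bra{i}$ is block-diagonal with respect to the label $i$. Hence any power preserves this block structure, and in particular
\begin{align*}
\rho_{AB}^{\,q}=\sum_i p_i^{\,q}\,(\rho_A^i)^{q}\otimes\ket{i}_B\bra{i},
\end{align*}
with no cross terms. Taking the trace and using $\T(\ket{i}_B\bra{i})=1$ gives $\T(\rho_{AB}^{\,q})=\sum_i p_i^{\,q}\,\T((\rho_A^i)^{q})$.

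Next I would plug this into the definition of Tsallis-$q$ entropy, obtaining
\begin{align*}
S_q(\rho_{AB})=\frac{1-\sum_i p_i^{\,q}\,\T((\rho_A^i)^{q})}{q-1},
\end{align*}
and then rewrite the right-hand side of the claim using $S_q(\rho_A^i)=\frac{1-\T((\rho_A^i)^{q})}{q-1}$ together with $H_q(\mathbf{P})=\frac{1-\sum_i p_i^{\,q}}{q-1}$ (as follows from Eq.~(\ref{Ctsallis})). A one-line algebraic cancellation of the $\sum_i p_i^{\,q}$ terms then matches the two expressions, completing the proof.

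There is essentially no obstacle here: once the orthogonality of $\{\ket{i}_B\}$ is used to reduce $\rho_{AB}^{\,q}$ to a direct sum, everything reduces to bookkeeping with the common denominator $q-1$. The only point to be mildly careful about is that the identity requires $q\neq 1$, since otherwise the formula $S_q(\rho)=(1-\T\rho^q)/(q-1)$ is indeterminate; the $q\to 1$ limit recovers the standard von Neumann joint entropy theorem, but that case is excluded by hypothesis and so needs no separate treatment.
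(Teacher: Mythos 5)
Your proposal is correct and follows essentially the same route as the paper's own proof: both use the orthogonality of $\{\ket{i}_B\}$ to reduce $\bigl(\sum_i p_i\rho_A^i\otimes\ket{i}_B\bra{i}\bigr)^q$ to $\sum_i p_i^q(\rho_A^i)^q\otimes\ket{i}_B\bra{i}$, take the trace, and finish by splitting the resulting expression over the common denominator $q-1$ to recover $H_q(\mathbf{P})+\sum_i p_i^q S_q(\rho_A^i)$. Your added remarks on the block-diagonal structure and on the exclusion of $q=1$ are accurate but do not change the argument.
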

\begin{proof}
From the definition of quantum Tsallis-$q$ entropy in Eq.~(\ref{Qtsallis}),
\begin{align}
S_q\left(\sum_{i}p_i\rho_A^i \otimes \ket{i}_B\bra{i}\right)=& \frac{1-\T\left(\sum_{i} p_{i} \rho_A^i\otimes\ket{i}_B\bra{i}\right)^{q}}{q-1}\nonumber\\
  =&\frac {1-\T\sum_{i}p_{i}^{q} \left(\rho_A^i\right)^{q}}{q-1}\nonumber\\
  =&\frac {1-\sum_{i}p_{i}^{q}}{q-1} +\sum_{i}p_{i}^{q}\frac{1-\T\left(\rho_A^i\right)^{q}}{q-1}\nonumber\\
  =&H_q\left(\mathbf{P}\right)+\sum_{i}p_{i}^{q}S_{q}\left(\rho_A^i\right),
\end{align}
\end{proof}

In fact, we can analogously show that Eq.~(\ref{eq: q_joint}) also holds in more general cases;
for a probability distribution $\mathbf{P}=\{p_i\}$ and a set of density operators $\{\rho^i\}$ with mutually orthogonal supports, we have
\begin{align}
S_q\left(\sum_{i}p_i\rho^i\right)=\sum_{i}p_{i}^{q}S_q \left(\rho^i\right)+H_q\left(\mathbf{P}\right),
\label{eq: gjoint2}
\end{align}
for $q \geq 0$ and $q\neq 1$.

Due to the continuity of Tsallis-$q$ entropy with respect to the parameter $q$, Eq.~(\ref{eq: q_joint}) is reduced to the joint entropy theorem
in terms of Shannon and von Neumann entropy,
\begin{align}
S\left(\sum_{i}p_i\rho_A^i \otimes \ket{i}_B\bra{i}\right)=\sum_{i}p_{i}S\left(\rho_A^i\right)+H\left(\mathbf{P}\right),
\label{eq: joint}
\end{align}
for the case when $q$ tends to $1$.
\section{Tsallis entanglement and Polygamy of multi-party quantum entanglement}
\label{Sec: TQC}
\subsection{Tsallis-$q$ entanglement}
\label{Subsec: Tentroty}

For a bipartite pure state $\ket{\psi}_{AB}$ with its reduced density matrix
$\rho_A=\T _{B} \ket{\psi}_{AB}\bra{\psi}$ onto subsystem $A$,
its Tsallis-$q$ entanglement is defined as~\cite{KimT}
\begin{equation}
{\mathcal T}_{q}\left(\ket{\psi}_{A|B} \right)=S_{q}(\rho_A).
\label{TEpure}
\end{equation}
For a bipartite mixed state $\rho_{AB}$,
its Tsallis-$q$ entanglement is defined via convex-roof extension,
\begin{equation}
{\mathcal T}_{q}\left(\rho_{A|B} \right)=\min \sum_i p_i {\mathcal T}_{q}(\ket{\psi_i}_{A|B}),
\label{TEmixed}
\end{equation}
where the minimization is taken over all possible pure state
decompositions of $\rho_{AB}$,
\begin{equation}
\rho_{AB}=\sum_{i} p_i |\psi^i\rangle_{AB}\langle\psi^i|.
\label{decomp}
\end{equation}

Because Tsallis-$q$ entropy converges to von Neumann entropy when $q$ tends to 1,
we have
\begin{align}
\lim_{q\rightarrow1}{\mathcal T}_{q}\left(\rho_{A|B} \right)=E_{\rm f}\left(\rho_{A|B} \right),
\end{align}
where $E_{\rm f}(\rho_{AB})$ is the {\em entanglement of formation}(EoF)~\cite{bdsw} of $\rho_{AB}$, defined as
\begin{align}
E_{\rm f}\left(\rho_{A|B}\right)&=\min\sum_{i}p_iS(\rho_A^i)\nonumber\\
&=\min\sum_{i}p_iS(\rho_B^i)\nonumber\\
&=E_{\rm f}\left(\rho_{B|A}\right)=E_{\rm f}\left(\rho_{AB}\right).\label{eof}
\end{align}
with the minimum taken over all possible pure state
decompositions of $\rho_{AB}$ in Eq.~(\ref{decomp}), $\rho^{i}_{A}=\T_{B}|\psi^i\rangle_{AB}\langle\psi^i|$
and $\rho^{i}_{B}=\T_{A}|\psi^i\rangle_{AB}\langle\psi^i|$.
Moreover, due to the coincidence
\begin{equation}
S_q\left(\rho^i_A\right)=S_q\left(\rho^i_B\right)
\end{equation}
for each $|\psi^i\rangle_{AB}$ in Eq.~(\ref{decomp}), we have
\begin{align}
{\mathcal T}_{q}\left(\rho_{A|B}\right)&=\min\sum_{i}p_iS_q(\rho_A^i)\nonumber\\
&=\min\sum_{i}p_iS_q(\rho_B^i)\nonumber\\
&={\mathcal T}_{q}\left(\rho_{B|A}\right).
\label{Tqcomm}
\end{align}

As a dual quantity to Tsallis-$q$ entanglement,
Tsallis-$q$ entanglement of Assistance(TEoA) is defined as~\cite{KimT}
\begin{equation}
{\mathcal T}^a_{q}\left(\rho_{A|B} \right)=\max \sum_i p_i {\mathcal T}_{q}(\ket{\psi_i}_{A|B}),
\label{TEoA}
\end{equation}
where the maximum is taken over all possible pure state
decompositions of $\rho_{AB}$.
Similarly, we have
\begin{align}
\lim_{q\rightarrow1}{\mathcal T}^a_{q}\left(\rho_{A|B}
\right)=E^a\left(\rho_{A|B} \right),
\label{TsallistoEoA}
\end{align}
where $E^a(\rho_{A|B})$ is the entanglement of assistance(EoA)
of $\rho_{AB}$ defined as~\cite{cohen}
\begin{equation}
E^a(\rho_{A|B})=\max \sum_{i}p_i S(\rho^{i}_{A}). \label{eoa}
\end{equation}
with the maximization over all possible pure state
decompositions of $\rho_{AB}$.

\subsection{Monogamy and polygamy inequalities of multi-party quantum entanglement based on generalized quantum entropies}
\label{Subsec: monopolyge}

Using Tsallis-$q$ entanglement in Eq.~(\ref{TEmixed}) to quantify bipartite quantum entanglement,
the monogamy inequality in Eq.~(\ref{MoE}) was established in multi-qubit systems; for
any $n$-qubit state $\rho_{A_1A_2\cdots A_n}$ and its two-qubit reduced density matrices $\rho_{A_1A_i}$ with $i=2,\cdots, n$,
we have
\begin{equation}
{\mathcal T}_{q}\left(\rho_{A_1|A_2\cdots A_n}\right)
\geq {\mathcal T}_{q}\left(\rho_{A_1|A_2}\right)+\cdots +{\mathcal T}_{q}\left(\rho_{A_1|A_n}\right),
\label{Tqmono}
\end{equation}
for $ 2\leq q \leq 3$~\cite{KimT}. It was also shown that TEoA can be used to characterize
the polygamy of multi-qubit entanglement as
\begin{equation}
{\mathcal T}^a_{q}\left(\rho_{A_1|A_2\cdots A_n}\right)
\leq {\mathcal T}^a_{q}\left(\rho_{A_1|A_2}\right)+\cdots +{\mathcal T}^a_{q}\left(\rho_{A_1|A_n}\right),
\label{Tqpolyqubit}
\end{equation}
for $1 \leq q \leq 2$ and $3 \leq q \leq 4$~\cite{KimT}.
Recently, more generalized monogamy and polygamy inequalities of multi-qubit entanglement was proposed
in terms of Tsallis-$q$ entanglement and TEoA for selective choices of $q$~\cite{KimAoP}.

Besides Tsallis-$q$ entropy, {\em R\'enyi-$\alpha$ entropy} is another one-parameter family of entropy functions,
which contains von Neumann entropy as a special case; for a positive real number $\alpha$ and a quantum state $\rho$,
the R\'enyi-$\alpha$ entropy of $\rho$ is defined as
\begin{equation}
R_{\alpha}(\rho)=\frac{1}{1-\alpha}\log \T \rho^{\alpha}
\label{Renyi}
\end{equation}
for $\alpha \neq 1$~\cite{renyi, horo}.
Similar to the case of Tsallis-$q$ entropy, R\'enyi-$\alpha$ entropy has a singularity at $\alpha =1$.
However this singularity is removable in the sense that R\'enyi-$\alpha$ entropy converges
to von Neumann entropy when $\alpha$ tends to $1$.

As a generalization of EoF into the full spectrum of
R\'enyi-$\alpha$ entropy, {\em R\'enyi-$\alpha$ entanglement} was introduced as
\begin{equation}
E_{\alpha}\left(\ket{\psi}_{A|B} \right)=R_{\alpha}(\rho_{A}),
\label{Eq: Renyi measure pure}
\end{equation}
for a bipartite pure state $\ket{\psi}_{\rm AB}$
and
\begin{equation}
E_{\alpha}\left(\rho_{A|B} \right)=\min \sum_i p_i E_{\alpha}\left(\ket{\psi_i}_{A|B}\right),
\label{Eq: Renyi measure mixed}
\end{equation}
for a bipartite mixed state $\rho_{\rm AB}$ with the minimum over all possible pure-state
decompositions of $\rho_{A|B}=\sum_{i}p_i \ket{\psi_i}_{AB}\bra{\psi_i}$~\cite{v, KSRenyi}.

Based on R\'enyi-$\alpha$ entanglement of order $2$(R\'enyi-$2$ entanglement), a monogamy inequality was established
for multi-qubit entanglement as
\begin{equation}
E_{2}\left(\rho_{A_1|A_2\cdots A_n}\right)
\geq E_{2}\left(\rho_{A_1|A_2}\right)+\cdots +E_{2}\left(\rho_{A_1|A_n}\right),
\label{R2mono}
\end{equation}
for a multi-qubit state $\rho_{A_1A_2\cdots A_n}$ and its two-qubit reduced density matrices $\rho_{A_1A_i}$~\cite{CO}.
Later, the validity of multi-qubit R\'enyi-$\alpha$ monogamy inequality
was shown for any $\alpha \geq2$, that is,
\begin{equation}
E_{\alpha}\left(\rho_{A_1|A_2\cdots A_n}\right)
\geq E_{\alpha}\left(\rho_{A_1|A_2}\right)+\cdots +E_{\alpha}\left(\rho_{A_1|A_n}\right),
\label{Rmono}
\end{equation}
for any multi-qubit state $\rho_{A_1A_2\cdots A_n}$ and $\alpha \geq 2$~\cite{KSRenyi}.

\subsection{Unification of polygamy inequalities}
\label{Subsec: unification}

The first polygamy inequality was established in three-qubit systems~\cite{GMS};
for a three-qubit pure state $\ket{\psi}_{ABC}$,
\begin{equation}
\tau\left(\ket{\psi}_{A|BC)}\right)\le\tau^a\left(\rho_{A|B}\right)
+\tau^a\left(\rho_{A|C}\right), \label{3dual}
\end{equation}
where
\begin{align}
\tau\left(\ket{\psi}_{A|BC}\right)=4\det\rho_A
\label{tangle}
\end{align}
is the tangle of the pure state $\ket{\psi}_{ABC}$ between $A$ and $BC$, and
\begin{align}
\tau^a\left(\rho_{A|B}\right)=\max \sum_i p_i\tau\left({\ket{\psi_i}_{A|B}}\right)
\label{tanglemix}
\end{align}
is the tangle of assistance of $\rho_{AB}=\T_C\ket{\psi}_{ABC}\bra{\psi}$ with the maximum taken over all
pure-state decompositions of $\rho_{AB}$.
Later, Inequality~(\ref{3dual}) was generalized into multi-qubit
systems~\cite{GBS}
\begin{align}
\tau^a\left(\rho_{A_1|A_2\cdots A_n}\right)
\leq& \tau^a\left(\rho_{A_1|A_2}\right)+\cdots +\tau^a\left(\rho_{A_1|A_n}\right),
\label{ntpoly}
\end{align}
for an arbitrary multi-qubit mixed state $\rho_{A_1\cdots A_n}$ and its two-qubit reduced density matrices $\rho_{A_1A_i}$
with $i=2, \cdots , n$.

For polygamy inequality beyond qubits, it was shown that EoA can be used
to establish a polygamy inequality of three-party quantum systems as
\begin{align}
E^a\left(\ket{\psi}_{A|BC)}\right)\leq& E^a\left(\rho_{A|B}\right)+E^a\left(\rho_{A|C}\right)
\label{EoApoly3}
\end{align}
for any three-party pure state $\ket{\psi}_{ABC}$ of arbitrary dimensions~\cite{BGK}. A general polygamy inequality
was established by generalizing EoA polygamy inequality in (\ref{EoApoly3})
into multi-party quantum systems as
\begin{align}
E^a\left(\rho_{A_1|A_2\cdots A_n}\right)
\leq& E^a\left(\rho_{A_1|A_2}\right)+\cdots +E^a\left(\rho_{A_1|A_n}\right),
\label{EoApoly}
\end{align}
for any multi-party quantum state $\rho_{A_1A_2\cdots A_n}$ of arbitrary dimension~\cite{KimGP}.

Now, let us consider an unified view of the polygamy inequalities of multi-party entanglement in terms of Tsallis-$q$ entropy.
For any two-qubit pure state $\ket{\psi}_{AB}$ (or any bipartite state with Schmidt-rank 2)
with a Schmidt decomposition
\begin{equation}
\ket{\psi}_{AB}=\sqrt{\lambda_1}\ket{e_0}_A\otimes\ket{f_0}_B+\sqrt{\lambda_2}\ket{e_1}_A\otimes\ket{f_1}_B,
\label{schmidt2}
\end{equation}
its tangle in Eq.~(\ref{tangle}) coincides with Tsallis-$2$ entanglement up to a constant factor
\begin{align}
\tau\left(\ket{\psi}_{A|B}\right)=4\lambda_0\lambda_1=2{\mathcal T}_{2}\left(\ket{\psi}_{A|B}\right).
\label{2tangle_tsallis}
\end{align}
Thus the tangle-based polygamy inequality in (\ref{ntpoly}) can be rephrased as
\begin{align}
{\mathcal T}^a_{2}\left(\rho_{A_1|A_2\cdots A_n}\right)
\leq& {\mathcal T}^a_{2}\left(\rho_{A_1|A_2}\right)+\cdots +{\mathcal T}^a_{2}\left(\rho_{A_1|A_n}\right),
\label{T_2poly}
\end{align}
for any multi-qubit state $\rho_{A_1\cdots A_n}$.

Due to the continuity of Tsallis-$q$ entropy, the relation between TEoA and EoA in Eq.~(\ref{TsallistoEoA})
enables us to rephrase EoA-based polygamy inequality in (\ref{EoApoly}) as
\begin{align}
{\mathcal T}^a_{1}\left(\rho_{A_1|A_2\cdots A_n}\right)
\leq& {\mathcal T}^a_{1}\left(\rho_{A_1|A_2}\right)+\cdots +{\mathcal T}^a_{1}\left(\rho_{A_1|A_n}\right).
\label{T_1poly}
\end{align}
In other words, the polygamy inequalities of multi-party quantum entanglement established so far
can be considered in an unified way using Tsallis-$q$ entropy as
\begin{align}
{\mathcal T}^a_{q}\left(\rho_{A_1|A_2\cdots A_n}\right)
\leq& {\mathcal T}^a_{q}\left(\rho_{A_1|A_2}\right)+\cdots +{\mathcal T}^a_{q}\left(\rho_{A_1|A_n}\right),
\label{Tqpolyuni}
\end{align}
for selective choices of $q$.

In the following sections, we investigate some properties of quantum correlations based on Tsallis-$q$ entropy,
and provide sufficient conditions, on which the Tsallis-$q$ polygamy inequality in (\ref{Tqpolyuni}) holds.

\section{$q$-expectation and quantum correlations}
\label{sec: q-exp}
The definition of Tsallis-$q$ entropy in Eq.~(\ref{Qtsallis}) uses the concept of $q$-expectation to generalize
von-Neumann entropy into a class of entropies parameterized by $q$. Here, we further generalize some quantum correlations
based on the idea of $q$-expectation, and investigate their properties.

For a quantum state $\rho$ and its ensemble representation $\mathcal E = \{p_i, \rho_i\}$
(equivalently, a probability decomposition $\rho=\sum_{i}p_i\rho_i$), {\em Tsallis-$q$ difference}
is defined as
\begin{align}
\chi_q\left(\mathcal E\right)=S_q\left(\rho\right)-\sum_{i}p_{i}^q S_q\left(\rho_i\right),
\label{eq: q-diff}
\end{align}
which is a one-parameter generalization of the Holevo quantity,
\begin{align}
\chi\left(\mathcal E\right)=S\left(\rho\right)-\sum_{i}p_i S\left(\rho_i\right),
\label{eq: holevo}
\end{align}
for $q=1$.
Due to the the concavity of Tsallis-$q$ entropy, Tsallis-$q$ difference is always nonnegative for $q\geq1$.

Now, let us consider a bipartite quantum state $\rho_{AB}$ with its reduced density matrix $\rho_A=\T_A\rho_{AB}$.
Each rank-1 measurement $\{M_x\}$ applied on subsystem $B$ induces a probability ensemble $\mathcal E = \{p_x, \rho_A^x\}$
of $\rho_A$ where $p_x\equiv \T[(I_A\otimes M_x)\rho_{AB}]$ is the probability of the outcome $x$ and
$\rho^x_A\equiv \T_B[(I_A\otimes {M_x})\rho_{AB}]/p_x$ is the state
of system $A$ when the outcome was $x$.
For $q\geq1$, we define {\em one-way unlocalizable $q$-entanglement}($q$-UE) as the minimum Tsallis-$q$ difference
\begin{equation}
\begin{split}
{\mathbf u}E_q^{\leftarrow}(\rho_{AB}) &= \min_{\mathcal E} \chi_q\left(\mathcal E\right),\\
\end{split}
\label{UEq2}
\end{equation}
where the minimum is taken over the ensemble representations $\mathcal E = \{p_x, \rho_A^x\}$ of $\rho_A$ induced by
all possible rank-1 measurements $\{M_x\}$ on subsystem $B$.

Due to the continuity of Tsallis-$q$ entropy with respect to the parameter $q$,
$q$-UE is reduced to the one-way unlocalizable entanglement
\begin{equation}
\begin{split}
{\mathbf u}E^{\leftarrow}(\rho_{AB}) &=  \min_{\mathcal E} \chi\left(\mathcal E\right),\\
\end{split}
\label{fragility2}
\end{equation}
when $q$ tends to $1$~\cite{BGK}.

The term {\em unlocalizable} arises for the following reasons.
Eq.~(\ref{UEq2}) together with Eq.~(\ref{eq: q-diff}) enable us to rewrite $q$-UE as
\begin{equation}
\begin{split}
{\mathbf u}E_q^{\leftarrow}(\rho_{AB})=  S_q(\rho_A)- \max_{\{M_x\}}\sum_x p^q_x S_q(\rho^x_A)
\end{split}
\label{UEq3}
\end{equation}
where the maximum is taken over all possible rank-1 measurements
$\{M_x\}$ applied on system $B$.

For a three-party purification
$\ket{\psi}_{ABC}$ of $\rho_{AB}$ such that
$\T_C\ket{\psi}_{ABC}\bra{\psi}=\rho_{AB}$,
we note that each rank-1 measurement $\{M_x\}$ applied on system $B$ induces a pure-state decomposition
of $\rho_{AC}=\T_B\ket{\psi}_{ABC}\bra{\psi}$
as
\begin{align}
\rho_{AC}=\sum_{x}p_x \ket{\phi^x}_{AC}\bra{\phi^x}
\label{ensemble}
\end{align}
where $p_x\equiv \T[(I_{AC}\otimes M_x)\ket{\psi}_{ABC}\bra{\psi}]$ and
$\ket{\phi^x}_{AC}\bra{\phi^x}\equiv \T_B[(I_{AC}\otimes M_x)\ket{\psi}_{ABC}\bra{\psi}]/p_x$.
Moreover, it is also straightforward to verify that each pure-state decomposition of
$\rho_{AC}=\sum_{x}p_x \ket{\phi^x}_{AC}\bra{\phi^x}$
induces a rank-1 measurement $\{M_x\}$ applied on system $B$.
Because we have
\begin{align}
\T_C \ket{\phi^x}_{AC}\bra{\phi^x}=\rho_A^x,
\label{indexx}
\end{align}
for each $x$, Eq.~(\ref{UEq3}) can be rewritten as
\begin{equation}
\begin{split}
{\mathbf u}E_q^{\leftarrow}(\rho_{AB})= {\mathcal T}_{q}\left(\ket{\psi}_{A|BC}\right)- \max\sum_x p^q_x {\mathcal T}_{q}\left(\ket{\phi^x}_{A|C}\right).
\end{split}
\label{UEq4}
\end{equation}

Here, ${\mathcal T}_{q}\left(\ket{\psi}_{A|BC}\right)=S_q\left( \rho_A\right)$ represents the amount of entanglement of the pure state
$\ket{\psi}_{ABC}$ between $A$ and $BC$ quantified by Tsallis-$q$ entanglement, and
$\max\sum_x p^q_x {\mathcal T}_{q}\left(\ket{\phi^x}_{A|C}\right)$ is the
maximum average entanglement(with respect to $q$-expectation)
that is possible to be concentrated on the subsystem $AC$ with the
assistance of $B$.
Thus ${\mathbf u}E_q^{\leftarrow}(\rho_{AB})$ is the residual entanglement that cannot be localized (therefore unlocalizable)
on $AC$ by the local measurement of $B$.

From the convexity of the function $f(x)=x^q$ for $q\geq1$ and the definition of TEoA in Eq.~(\ref{TEoA}), we have
\begin{align}
{\mathcal T}^a_{q}\left(\rho_{A|C}\right)
\geq&\max\sum_x p^q_x {\mathcal T}_{q}\left(\ket{\phi^x}_{A|C}\right),
\label{TEoAbound}
\end{align}
and this leads Eq.~(\ref{UEq4}) to
\begin{equation}
\begin{split}
{\mathbf u}E_q^{\leftarrow}(\rho_{AB})\geq  {\mathcal T}_{q}\left(\ket{\psi}_{A|BC}\right)- {\mathcal T}_{q}\left(\rho_{A|C}\right),
\end{split}
\label{UEqbound}
\end{equation}
for $q\geq1$.
Analogously, we also have
\begin{equation}
\begin{split}
{\mathbf u}E_q^{\leftarrow}(\rho_{AC})\geq  {\mathcal T}_{q}\left(\ket{\psi}_{A|BC}\right)- {\mathcal T}_{q}\left(\rho_{A|B}\right).
\end{split}
\label{UEqbound2}
\end{equation}

To end this section, we provide a one-parameter generalization of quantum mutual information using Tsallis-$q$ entropy;
for a bipartite quantum state $\rho_{AB}$ with reduced density matrices $\rho_A=\T_B \rho_{AB}$ and $\rho_B=\T_A \rho_{AB}$,
the {\em Tsallis-$q$ mutual entropy} is defined as
\begin{align}
I_q\left(\rho_{A:B}\right)=S_q\left(\rho_A\right)+S_q\left(\rho_B\right)-S_q\left(\rho_{AB}\right)
\label{eq: qmutul}
\end{align}
for $q\geq 1$.

Due to the continuity of Tsallis-$q$ entropy, the Tsallis-$q$ mutual entropy in Eq.~(\ref{eq: qmutul}) is reduced to the quantum mutual information,
\begin{align}
I\left(\rho_{A:B}\right)=S\left(\rho_A\right)+S\left(\rho_B\right)-S\left(\rho_{AB}\right),
\label{eq: mutul}
\end{align}
for the case that $q$ tends to $1$.
However, we do not use the term {\em mutual information} for Eq.~(\ref{eq: qmutul}) because a proper evidence of channel coding theorem
for information transmission has not been shown in the context of Tsallis entropy, even in classical sense.

\section{Classical-Classical-Quantum(ccq) states}
\label{sec: ccq}

In this section, we consider a four-party classical-classical-quantum(ccq) state $\Omega_{XYAB}$ whose quantum part $AB$ is obtained from a given bipartite quantum state
$\rho_{AB}$ by applying local unitary operations depending on the classical part $A$ and $B$.
We also evaluate the Tsallis-$q$ mutual entropies of $\Omega_{XYAB}$ as well as
its reduced density matrices, which will provide some sufficient condition for the general polygamy inequality of multi-party quantum entanglement in terms of TEoA.

For a two-qudit quantum state $\rho_{AB}$ in $\mathcal{H}_A \otimes \mathcal{H}_B \simeq \mathcal B\left(\C^d\otimes \C^d \right)$ and
the reduced density matrix $\rho_B=\T_{A}(\rho_{AB})$, let us consider a spectral decomposition,
\begin{align}
\rho_B=\sum_{i=0}^{d-1}\lambda_{i}\ket{e_i}_B\bra{e_i}.
\label{specrhoB}
\end{align}
Using the eigenvectors of $\rho_B$, we define two quantum channels $M_{0}$ and $M_{1}$
\begin{align}
M_{0}(\sigma)&=\sum_{i=0}^{d-1}\ket{e_i}\bra{e_i}\sigma\ket{e_i}\bra{e_i}\nonumber\\
M_{1}(\sigma)&=\sum_{i=0}^{d-1} |\tilde e_j\rangle\langle\tilde
e_j|\sigma|\tilde e_j\rangle\langle\tilde e_j|, \label{channels1}
\end{align}
acting on any quantum state $\sigma$ of subsystem $\mathcal{H}_B$,
where $\{ |\tilde e_j \rangle \}_{j} $ is the $d$-dimensional {\em Fourier basis},
\begin{equation}
|\tilde e_j \rangle = \frac{1}{\sqrt{d}}\sum_{k=0}^{d-1}
\omega_d^{jk}\ket{e_k},~j=0,\ldots ,d-1, \label{fourier}
\end{equation}
and $\omega_d = e^{\frac{2\pi i}{d}}$ is the $d$th-root of unity.
By using the generalized $d$-dimensional Pauli operators
\begin{align}
Z=&\sum_{j=0}^{d-1}\omega_d^j\ket{e_j}\bra{e_j},\nonumber\\
X=&\sum_{j=0}^{d-1}\ket{e_{j+1}}\bra{e_j}=\sum_{j=0}^{d-1} \omega_d^{-j}|\tilde
e_j \rangle \langle \tilde e_j |,\label{paulis}
\end{align}
Eqs.~(\ref{channels1}) can be rewritten as
\begin{equation}
M_0(\sigma)=\frac{1}{d}\sum_{b=0}^{d-1}Z^b\sigma Z^{-b},~
M_{1}(\sigma)=\frac{1}{d}\sum_{a=0}^{d-1}X^a\sigma X^{-a}.
\label{channels2}
\end{equation}

The channels $M_{0}$ and $M_{1}$ act on $\rho_B$ as
\begin{align}
M_0(\rho_B)=&\rho_B,~M_1(\rho_B)=\frac{1}{d}I_B,
\label{m0m1id1}
\end{align}
and
\begin{align}
M_1(M_0(\rho_B))=&M_0(M_1(\rho_B))=\frac{1}{d}I_B,
\label{m0m1id2}
\end{align}
thus the actions of the channels $M_{0}$ and $M_{1}$ on the subsystem $B$ of the bipartite state $\rho_{AB}$ are
\begin{align}
(I_A\otimes M_0)(\rho_{AB})&=\sum_{i=0}^{d-1} \sigma_A^i \otimes \lambda_i\ket{e_i}_B\bra{e_i},\nonumber\\
(I_A\otimes M_1)(\rho_{AB})&=\sum_{j=0}^{d-1} \tau_A^j \otimes
\frac{1}{d}|\tilde e_j \rangle_B \langle \tilde e_j|,
\end{align}
where $\lambda_i\sigma_A^i=\T_B [(I_A \otimes
\ket{e_i}_B\bra{e_i})\rho_{AB}]$ and $\tau_A^j/d=\T_B[(I_A
\otimes |\tilde e_j\rangle_B \langle\tilde e_j|)\rho_{AB}]$ for $i,~j \in \{0,\cdots, d-1\}$.

The ensembles of subsystem $A$ induced by the action of the channels $M_0$ and $M_1$ on subsystem $B$ are
\begin{align}
\mathcal E_0=\{\lambda_i,\sigma_A^i\}_i,~
\mathcal E_1:=\{\frac{1}{d},\tau_A^j\}_j,
\label{ensembles}
\end{align}
and their Tsallis-$q$ differences are
\begin{align}
\chi_q(\mathcal E_0)=&S_q(\rho_A)-\sum_{i=0}^{d-1}\lambda_i^q S_q(\sigma_A^i)
\label{chi0}
\end{align}
and
\begin{align}
\chi_q(\mathcal E_1)=&S_q(\rho_A)-\frac{1}{d^q}\sum_{i=0}^{d-1}S_q(\tau_A^j),
\label{chi1}
\end{align}
respectively.

Now, let us consider a four-qudit ccq-state $\Omega_{XYAB}$ in
$\mathcal{H}_X \otimes \mathcal{H}_Y\otimes\mathcal{H}_A \otimes \mathcal{H}_B$,
\begin{widetext}
\begin{equation}
  \Omega_{XYAB}:=\frac 1{d^2}\sum_{x,y=0}^{d-1}\ket{x}_X
  \bra{x}\otimes\ket{y}_Y\bra{y}\otimes(I_A\otimes X^x_BZ^y_B)\rho_{AB}(I_A\otimes
  Z^{-y}_BX^{-x}_B),
\label{XYAB}
\end{equation}
with the reduced density matrices
\begin{align}
\Omega_{XAB}=&\frac
1{d}\sum_{x=0}^{d-1}\ket{x}_X\bra{x}\otimes X^x_B
\left(\sum_{i=0}^{d-1} \sigma_A^i \otimes \lambda_i\ket{e_i}_B\bra{e_i}\right)X_B^{-x},
\label{XAB}
\end{align}
\begin{align}
\Omega_{YAB}=&\frac
1{d}\sum_{y=0}^{d-1}\ket{y}_Y\bra{y}\otimes
Z_B^y\left(\sum_{j=0}^{d-1} \tau_A^j \otimes
\frac{1}{d}|\tilde e_j \rangle_B \langle \tilde
e_j|\right)Z_B^{-y}, \label{YAB}
\end{align}
\end{widetext}
\begin{align}
\Omega_{AB}=\rho_A\otimes\frac{I_B}{d},~\Omega_{XY}=\frac{I_{XY}}{d^2},
\label{AB_XY}
\end{align}
and
\begin{align}
\Omega_{X}=\frac{I_{X}}{d},~\Omega_{Y}=\frac{I_{Y}}{d}.
\label{X_Y}
\end{align}

For the Tsallis-$q$ mutual entropies of $\Omega_{XYAB}$,
$\Omega_{XAB}$ and $\Omega_{YAB}$ in Eqs.~(\ref{XYAB}),
(\ref{XAB}) and (\ref{YAB}), we have
\begin{align}
I_q\left(\Omega_{XY:AB}\right)=&\frac{d^{1-q}-1}{1-q}+d^{1-q}S_q\left(\rho_A\right)\nonumber\\
&-d^{2(1-q)}S_q\left(\rho_{AB}\right),
\label{IqXYAB2}
\end{align}
\begin{align}
I_q\left(\Omega_{X:AB}\right)
=&\frac{d^{1-q}-1}{1-q}-d^{1-q}S_q\left(\rho_B\right)+d^{1-q}\chi_q(\mathcal E_0)
\label{IqXAB2}
\end{align}
and
\begin{align}
I_q\left(\Omega_{Y:AB}\right)=(1-d^{1-q})\frac{d^{1-q}-1}{1-q}+d^{1-q}\chi_q(\mathcal E_1),
\label{IqYAB2}
\end{align}
where the detail calculations can be found in Appendix~\ref{ap1}.

\section{General Polygamy inequality of multi-party quantum entanglement in terms of Tsallis entropy}
\label{sec: gpoly}

In this section, we provide some sufficient condition for the general polygamy inequality of multi-party quantum entanglement
in arbitrary dimensions using Tsallis-$q$ entropy. The following theorem shows that the subadditivity of Tsallis-$q$ mutual entropy for
ccq states implies the polygamy inequality of three-party quantum entanglement in terms of Tsallis-$q$ entanglement.
\begin{Thm}
For $q\geq 1$, and any three-party pure state $\ket{\psi}_{ABC}$ of arbitrary dimension,
we have
\begin{align}
{\mathcal T}_{q}\left(\ket{\psi}_{A|BC}\right)
\leq& {\mathcal T}^a_{q}\left(\rho_{A|B}\right)+{\mathcal T}^a_{q}\left(\rho_{A|C}\right),
\label{Tqpoly3}
\end{align}
conditioned on the subadditivity of Tsallis-$q$ mutual entropy for the ccq state in Eq.~(\ref{XYAB}), that is,
\begin{align}
I_q\left(\Omega_{XY:AB}\right)\geq I_q\left(\Omega_{X:AB}\right)+I_q\left(\Omega_{Y:AB}\right).
\label{subadd1}
\end{align}
\label{thm: Tqpoly3}
\end{Thm}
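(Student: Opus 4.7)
The plan is to convert the assumed subadditivity (\ref{subadd1}) into the polygamy inequality by substituting the explicit expressions for the Tsallis-$q$ mutual entropies (\ref{IqXYAB2})--(\ref{IqYAB2}) and invoking the TEoA bound (\ref{TEoAbound}) together with the purification picture of $\ket{\psi}_{ABC}$.

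First, insert the three mutual-entropy formulas into (\ref{subadd1}) and cancel the common terms involving $\eta=(d^{1-q}-1)/(1-q)$. After dividing by $d^{1-q}$, this rearranges to an upper bound
\begin{align*}
\chi_q(\mathcal{E}_0) + \chi_q(\mathcal{E}_1) \leq S_q(\rho_A) + S_q(\rho_B) - d^{1-q} S_q(\rho_{AB}) - (d^{q-1}-1)\eta.
\end{align*}
By purity of $\ket{\psi}_{ABC}$ we further replace $S_q(\rho_{AB})$ by $S_q(\rho_C)$, so the right-hand side involves only single-party Tsallis entropies together with dimension-dependent corrections.

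Second, interpret the two ensembles through the purification. Any rank-$1$ measurement on $B$ produces a pure-state decomposition of $\rho_{AC}$, so $\sigma_A^i=\T_C\ket{\phi_i^0}_{AC}\bra{\phi_i^0}$ and analogously $\tau_A^j=d\,\T_C\ket{\phi_j^1}_{AC}\bra{\phi_j^1}$, which makes $S_q(\sigma_A^i)=\mathcal{T}_q(\ket{\phi_i^0}_{A|C})$ and likewise for $\tau_A^j$. The TEoA bound (\ref{TEoAbound}) then yields
\begin{align*}
\chi_q(\mathcal{E}_i) \geq S_q(\rho_A) - \mathcal{T}_q^a(\rho_{A|C}) = \mathcal{T}_q(\ket{\psi}_{A|BC}) - \mathcal{T}_q^a(\rho_{A|C})
\end{align*}
for $i=0,1$. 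Chaining this lower bound against the upper bound from Step~1 and rearranging will deliver (\ref{Tqpoly3}) once the correction terms are absorbed.

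The step I expect to be the main obstacle is identifying the residual combination $S_q(\rho_B)-d^{1-q}S_q(\rho_C)-(d^{q-1}-1)\eta$ with the assisted entanglement $\mathcal{T}_q^a(\rho_{A|B})$. Because measurements on $B$ only connect $\chi_q$ directly to the $A|C$ side, the appearance of $\mathcal{T}_q^a(\rho_{A|B})$ must be forced by the specific Pauli-twirl structure of $\Omega_{XYAB}$ and the concavity of $S_q$. Pinning this down will likely require a Tsallis-$q$ analog of the Koashi--Winter identity, together with the pseudoadditivity (\ref{nonext}) applied at the construction's dimension $d=\dim\mathcal{H}_B$; in the limiting case $q\to 1$, this residual collapses (via $\eta\to\log d$ and $d^{1-q}\to 1$) to the quantum mutual information and matches the known EoA polygamy (\ref{EoApoly3}), which gives a useful sanity check for the general manipulation.
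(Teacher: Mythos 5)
Your Step~1 reproduces the paper's Inequality~(\ref{ensemine1}) exactly, and your lower bound $\chi_q(\mathcal E_i)\geq S_q(\rho_A)-{\mathcal T}^a_q(\rho_{A|C})$ is precisely the content of Eqs.~(\ref{UEq4})--(\ref{UEqbound}) (the paper packages it through ${\mathbf u}E_q^{\leftarrow}(\rho_{AB})$ and the combined measurement ${\bf Q}_B$, but the substance is the same). The genuine gap is in how you propose to finish. Chaining your two bounds gives
\begin{align*}
S_q(\rho_A)\leq 2{\mathcal T}^a_q(\rho_{A|C})+\Bigl[S_q(\rho_B)-d^{1-q}S_q(\rho_C)+\tfrac{(d^{1-q}-1)^2}{d^{1-q}(1-q)}\Bigr],
\end{align*}
and you hope to identify the bracket with ${\mathcal T}^a_q(\rho_{A|B})-{\mathcal T}^a_q(\rho_{A|C})$ via a Tsallis-$q$ Koashi--Winter identity. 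That identification is a dead end: the bracket is a function of the marginal spectra of $\rho_B$ and $\rho_C$ alone, and since the whole construction is symmetric under $B\leftrightarrow C$, the same reasoning applied to the mirror image would force ${\mathcal T}^a_q(\rho_{A|B})-{\mathcal T}^a_q(\rho_{A|C})$ to \emph{equal} such a function of local entropies, which fails already at $q=1$. Measurements on $B$ can only ever produce the term ${\mathcal T}^a_q(\rho_{A|C})$; no manipulation of the residual entropies will conjure ${\mathcal T}^a_q(\rho_{A|B})$ out of the single chain you set up.

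The paper's resolution is much more elementary than what you anticipate. It runs the identical construction a second time with $B$ and $C$ interchanged (the ccq state built on $\rho_{AC}$, with the Pauli twirl acting on $C$), whose subadditivity yields the mirror-image bound (\ref{upperUEAC}) and whose measurements on $C$ give decompositions of $\rho_{AB}$, hence the lower bound (\ref{UEqbound2}) involving ${\mathcal T}^a_q(\rho_{A|B})$. Adding the two chains and using $S_q(\rho_{AB})=S_q(\rho_C)$ and $S_q(\rho_{AC})=S_q(\rho_B)$ for the pure state, the dimension-dependent leftovers collect into $\frac{\Xi_B+\Xi_C}{2}$ with $\Xi_B,\Xi_C$ as in Eqs.~(\ref{XiB})--(\ref{XiC}), and these are simply discarded because they are nonnegative: this follows from nothing more than the maximal-entropy bound $S_q(\rho_B)\leq S_q(I_B/d)=\frac{1-d^{1-q}}{q-1}\leq\frac{d^{q-1}-1}{q-1}$ for $q>1$ (Inequalities~(\ref{XiBbd})--(\ref{XiBCbd2})). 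So the missing ingredients are the $B\leftrightarrow C$ symmetrization and an elementary dimension bound on Tsallis entropy, not a Koashi--Winter analog. (Note in passing that the argument therefore uses the subadditivity hypothesis for both ccq states, the one built on $\rho_{AB}$ and the one built on $\rho_{AC}$.)
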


We note that TEoA in Eq.~(\ref{TEoA}) reduces to EoA in Eq.~(\ref{eoa}) for the case that $q=1$, where the general polygamy inequality
of multi-party entanglement in terms of EoA was shown as Inequality~(\ref{EoApoly})~\cite{BGK}. Thus we show the theorem for $q>1$.
We also assume that, without loss of generality, $\ket{\psi}_{ABC}$ is a three-qudit state, that is,
$\ket{\psi}_{ABC} \in \left(\C^{d}\right)^{\otimes 3}$, otherwise, we can always consider an imbedded image of $\ket{\psi}_{ABC}$
into a higher dimensional quantum system having the same dimensions of subsystems.

\begin{proof}
For the reduced density matrices $\rho_{AB}=\T_C \ket{\psi}_{ABC}\bra{\psi}$ of $\ket{\psi}_{ABC}$ on subsystem $AB$,
let us consider the ccq state in Eq.~(\ref{XYAB}).
From Eqs.~(\ref{IqXAB2}), (\ref{IqYAB2}) and (\ref{IqXYAB2}), we can rewrite Inequality~(\ref{subadd1}) as
\begin{align}
\chi_q(\mathcal E_0)+\chi_q(\mathcal E_1)\leq& S_q\left(\rho_A\right)+S_q\left(\rho_B\right)\nonumber\\
&-d^{1-q}S_q\left(\rho_{AB}\right)+\frac{\left(d^{1-q}-1\right)^2}{d^{1-q}(1-q)}.
\label{ensemine1}
\end{align}

Because $\chi_q(\mathcal E_0)$ and $\chi_q(\mathcal E_1)$ of
Eqs.~(\ref{chi0}) and (\ref{chi1}) can be obtained, respectively, from
$\rho_{AB}$ by rank-1 measurements $\{\ket{e_i}_B\bra{e_i} \}_i$
and $\{ |\tilde e_j \rangle_B \langle \tilde e_j| \}_j$ of
subsystem $B$, the rank-1 measurement
\begin{equation}
{\bf Q}_B:= \{ \frac{\ket{e_i}_B\bra{e_i}}{2}, \frac{|\tilde e_j
\rangle_B \langle \tilde e_j|}{2}\}_{i,j}, \label{povm}
\end{equation}
of subsystem $B$ provides an upperbound of $q$-UE in Eq.~(\ref{UEq2}) as
\begin{align}
{\mathbf u}E_q^{\leftarrow}(\rho_{AB})\leq \frac{\chi_q(\mathcal E_0)+\chi_q(\mathcal E_1)}{2}.
\label{tuupper1}
\end{align}
Thus, together with Inequality (\ref{ensemine1}), we have
\begin{align}
{\mathbf u}E_q^{\leftarrow}(\rho_{AB})\leq& \frac{1}{2}[S_q\left(\rho_A\right)+S_q\left(\rho_B\right)\nonumber\\
&-d^{1-q}S_q\left(\rho_{AB}\right)+\frac{\left(d^{1-q}-1\right)^2}{d^{1-q}(1-q)}].
\label{upperUEAB}
\end{align}
Moreover, we also analogously have
\begin{align}
{\mathbf u}E_q^{\leftarrow}(\rho_{AC})\leq& \frac{1}{2}[S_q\left(\rho_A\right)+S_q\left(\rho_C\right)\nonumber\\
&-d^{1-q}S_q\left(\rho_{AC}\right)+\frac{\left(d^{1-q}-1\right)^2}{d^{1-q}(1-q)}],
\label{upperUEAC}
\end{align}
for the reduced density matrix $\rho_{AC}=\T_B \ket{\psi}_{ABC}\bra{\psi}$ on subsystem $AC$.

As $S_q\left(\rho_{AB}\right)=S_q\left(\rho_{C}\right)$
and $S_q\left(\rho_{AC}\right)=S_q\left(\rho_{B}\right)$ for the three-party pure state $\ket{\psi}_{ABC}$,
Inequalities (\ref{UEqbound}) and (\ref{UEqbound2}) together with Inequalities (\ref{upperUEAB})
and (\ref{upperUEAC}) lead us to
\begin{align}
{\mathcal T}_{q}\left(\rho_{A|B}\right)+{\mathcal T}_{q}\left(\rho_{A|C}\right)\geq& 2S_q\left(\rho_A\right)\nonumber\\
&-{\mathbf u}E_q^{\leftarrow}(\rho_{AB})-{\mathbf u}E_q^{\leftarrow}(\rho_{AC})\nonumber\\
\geq& S_q\left(\rho_A\right)+\frac{\Xi_B +\Xi_C}{2}
\label{bdtog1}
\end{align}
where
\begin{align}
\Xi_B=\frac{d^{q-1}-1}{d^{q-1}}\left[\frac{d^{q-1}-1}{q-1}-S_q\left(\rho_B \right)\right]
\label{XiB}
\end{align}
and
\begin{align}
\Xi_C=\frac{d^{q-1}-1}{d^{q-1}}\left[\frac{d^{q-1}-1}{q-1}-S_q\left(\rho_C \right)\right].
\label{XiC}
\end{align}

For $q>1$, the factor $\frac{d^{q-1}-1}{d^{q-1}}$ in Eqs.~(\ref{XiB}) and (\ref{XiC}) is nonnegative.
Moreover, due to the fact that Tsallis-$q$ entropy attains its maximum value for the maximally mixed state $\frac{I_B}{d}$, we have
\begin{align}
S_q\left(\rho_B \right)\leq S_q\left(\frac{I_B}{d} \right)
=&\frac{1-d^{1-q}}{q-1}\nonumber\\
=&\frac{d^{q-1}-1}{d^{q-1}(q-1)}\nonumber\\
\leq&\frac{d^{q-1}-1}{q-1},
\label{XiBbd}
\end{align}
for $q>1$.
Similarly, we have
\begin{align}
S_q\left(\rho_C \right)\leq \frac{d^{q-1}-1}{q-1},
\label{XiCbd}
\end{align}
and thus
\begin{align}
\Xi_B\geq0,~\Xi_C\geq0
\label{XiBCbd2}
\end{align}
for $q>1$.

Inequality~(\ref{bdtog1}) together Inequalities~(\ref{XiBCbd2}), we have
\begin{align}
S_q\left(\rho_A\right) \leq {\mathcal T}_{q}\left(\rho_{A|B}\right)+{\mathcal T}_{q}\left(\rho_{A|C}\right),
\label{poly3pure}
\end{align}
which recovers Inequality~(\ref{Tqpoly3}) because ${\mathcal T}_{q}\left(\ket{\psi}_{A|BC}\right)=S_q\left(\rho_A\right)$
for three-party pure state $\ket{\psi}_{ABC}$.
\end{proof}

We note that, for $q=1$, Tsallis-$q$ mutual entropy is reduced to the quantum mutual information,
which is subadditive for ccq-states(Appendix~\ref{ap2}). Thus Theorem~\ref{thm: Tqpoly3}
guarantees the general polygamy inequality of TEoA without the subadditivity condition
(\ref{subadd1}) for $q=1$. This also recovers the results in~\cite{BGK}.

Now, we generalize the polygamy inequality of three-party quantum entanglement in Theorem~\ref{thm: Tqpoly3} into
an arbitrary multi-party quantum systems.
\begin{Thm}
For $q\geq 1$, the general polygamy inequality multi-party quantum entanglement,
\begin{align}
{\mathcal T}^a_{q}\left(\rho_{A_1|A_2\cdots A_n}\right)
\leq& {\mathcal T}^a_{q}\left(\rho_{A_1|A_2}\right)+\cdots +{\mathcal T}^a_{q}\left(\rho_{A_1|A_n}\right)
\label{Tqpolyn}
\end{align}
holds for any multi-party quantum state $\rho_{A_1A_2\cdots A_n}$ of arbitrary dimension,
conditioned on the subadditivity of Tsallis-$q$ mutual entropy for the ccq state in Eq.~(\ref{XYAB}) .
\label{thm: Tqpolyn}
\end{Thm}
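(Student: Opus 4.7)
The plan is to lift Theorem~\ref{thm: Tqpoly3} from pure three-party states to arbitrary multi-party mixed states in two stages: first I would obtain a mixed tripartite version of the TEoA polygamy inequality by a decomposition argument, and then I would iterate that bound by induction on the number of parties $n$.

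For the first stage, given $\rho_{A_1A_2\cdots A_n}$ of arbitrary dimension, I would fix an optimal pure-state decomposition $\rho_{A_1A_2\cdots A_n}=\sum_k p_k \ket{\psi_k}\bra{\psi_k}$ attaining ${\mathcal T}^a_{q}\left(\rho_{A_1|A_2\cdots A_n}\right)=\sum_k p_k {\mathcal T}_{q}\left(\ket{\psi_k}_{A_1|A_2\cdots A_n}\right)$. Regarding each $\ket{\psi_k}$ as a three-party pure state on the tripartition $A_1$, $B:=A_2\cdots A_{n-1}$, $A_n$, Theorem~\ref{thm: Tqpoly3} supplies
\begin{align}
{\mathcal T}_{q}\left(\ket{\psi_k}_{A_1|BA_n}\right)\leq {\mathcal T}^a_{q}\left(\rho^k_{A_1|B}\right)+{\mathcal T}^a_{q}\left(\rho^k_{A_1|A_n}\right),
\end{align}
with $\rho^k_{A_1B}=\T_{A_n}\ket{\psi_k}\bra{\psi_k}$ and $\rho^k_{A_1A_n}=\T_B\ket{\psi_k}\bra{\psi_k}$. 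Taking the $p_k$-weighted average and using the concavity of TEoA under state mixing, $\sum_k p_k{\mathcal T}^a_{q}(\sigma^k)\leq{\mathcal T}^a_{q}(\sum_k p_k \sigma^k)$---immediate from the defining maximum, since pooling optimal decompositions of the $\sigma^k$ yields a valid decomposition of the mixture---I would deduce the mixed tripartite bound
\begin{align}
{\mathcal T}^a_{q}\left(\rho_{A_1|A_2\cdots A_n}\right)\leq{\mathcal T}^a_{q}\left(\rho_{A_1|A_2\cdots A_{n-1}}\right)+{\mathcal T}^a_{q}\left(\rho_{A_1|A_n}\right).
\end{align}

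For the second stage I would induct on $n$. The base case $n=2$ is trivial, and the inductive step from $n-1$ to $n$ uses the displayed bound to peel off party $A_n$ and then applies the induction hypothesis to the $(n-1)$-party reduced state $\rho_{A_1A_2\cdots A_{n-1}}$ to obtain ${\mathcal T}^a_{q}\left(\rho_{A_1|A_2\cdots A_{n-1}}\right)\leq\sum_{i=2}^{n-1}{\mathcal T}^a_{q}\left(\rho_{A_1|A_i}\right)$; combining these yields the desired Inequality~(\ref{Tqpolyn}).

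The main subtlety I anticipate is bookkeeping with the hypothesis of Theorem~\ref{thm: Tqpoly3}, which conditions on the subadditivity of Tsallis-$q$ mutual entropy for the ccq state built from a specified bipartite reduction. In the reduction above, this hypothesis must hold uniformly for the ccq states associated with each component marginal $\rho^k_{A_1B}=\T_{A_n}\ket{\psi_k}\bra{\psi_k}$ drawn from the optimal decomposition of every reduced state encountered in the induction. I would therefore read the conditional clause in the statement of Theorem~\ref{thm: Tqpolyn} as requiring ccq-subadditivity uniformly over this naturally generated family of bipartite states, so that the pointwise application of Theorem~\ref{thm: Tqpoly3} is legitimate at every step.
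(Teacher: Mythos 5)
Your proposal is correct and follows essentially the same route as the paper: an optimal decomposition for ${\mathcal T}^a_q$ across the cut $A_1|A_2\cdots A_n$, pointwise application of Theorem~\ref{thm: Tqpoly3} to each pure component viewed as tripartite, the concavity of TEoA from its defining maximum, and induction on $n$ (the paper peels off $A_2$ rather than $A_n$, an immaterial relabeling). Your closing remark about the subadditivity hypothesis needing to hold for the ccq states of every bipartite marginal encountered is a fair reading of the conditional clause, which the paper states somewhat loosely.
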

\begin{proof}
We first prove the theorem for a three-party mixed state $\rho_{ABC}$, and inductively show the validity of the theorem for
an arbitrary $n$-party quantum state $\rho_{A_1A_2\cdots A_n}$.

For a three-party mixed state $\rho_{ABC}$, let us consider an optimal decomposition of $\rho_{ABC}$ for TEoA with respect to the bipartition
between $A$ and $BC$, that is,
\begin{align}
\rho_{ABC}=\sum_i p_i\ket{\psi_i}_{ABC}\bra{\psi_i},
\label{opt1}
\end{align}
with
\begin{align}
{\mathcal T}^a_{q}\left(\rho_{A|BC}\right)=\sum_i p_i {\mathcal T}_{q}\left(\ket{\psi_i}_{A|BC}\right).
\label{optTEoA1}
\end{align}

From Theorem~\ref{thm: Tqpoly3}, each $\ket{\psi_i}_{ABC}$ in Eq.~(\ref{optTEoA1}) satisfies the polygamy inequality,
\begin{align}
{\mathcal T}_{q}\left(\ket{\psi_i}_{A|BC}\right)\leq {\mathcal T}^a_{q}\left(\rho^i_{A|B}\right)+{\mathcal T}^a_{q}\left(\rho^i_{A|C}\right)
\label{polyi}
\end{align}
with $\rho^i_{AB}=\T_C \ket{\psi_i}_{ABC}\bra{\psi_i}$ and $\rho^i_{AC}=\T_B \ket{\psi_i}_{ABC}\bra{\psi_i}$,
therefore, together with Eq.~(\ref{optTEoA1}), we have
\begin{align}
{\mathcal T}^a_{q}\left(\rho_{A|BC}\right)\leq&\sum_i p_i{\mathcal T}^a_{q}\left(\rho^i_{A|B}\right)+\sum_i p_i{\mathcal T}^a_{q}\left(\rho^i_{A|C}\right)\nonumber\\
\leq&{\mathcal T}^a_{q}\left(\rho_{A|B}\right)+{\mathcal T}^a_{q}\left(\rho_{A|C}\right)
\label{poly3bd1}
\end{align}
where the second inequality is from the definition of TEoA.

Now let us assume Inequality~(\ref{poly3bd1}) is true for and $(n-1)$-party quantum state, and consider
an $n$-party quantum state $\rho_{A_1A_2\cdots A_n}$.
By considering $\rho_{A_1A_2\cdots A_n}$ as a three-party state with respect to the partition $A_1$, $A_2$ and $A_3\cdots A_n$,
Inequality~(\ref{poly3bd1}) leads us to
\begin{align}
{\mathcal T}^a_{q}\left(\rho_{A_1|A_2\cdots A_n}\right)
\leq& {\mathcal T}^a_{q}\left(\rho_{A_1|A_2}\right)+{\mathcal T}^a_{q}\left(\rho_{A_1|A_3\cdots A_n}\right), \label{polymixed1}
\end{align}
where $\rho_{A_1A_2}=\T_{A_3\cdots A_n}\rho_{A_1A_2\cdots A_n}$,
$\rho_{A_1A_3\cdots A_n}=\T_{A_2}\rho_{A_1A_2\cdots A_n}$, and
${\mathcal T}^a_{q}\left(\rho_{A_1|A_3\cdots A_n}\right)$ is TEoA of
$\rho_{A_1A_3\cdots A_n}$ with respect to the bipartition between
$A_1$ and $A_3\cdots A_n$.

Because $\rho_{A_1A_3\cdots A_n}$ in Inequality~(\ref{polymixed1}) is a
$(n-1)$-party quantum state, the induction hypothesis assures that
\begin{align}
{\mathcal T}^a_{q}\left(\rho_{A_1|A_3\cdots A_n}\right) \leq
{\mathcal T}^a_{q}\left(\rho_{A_1|A_3}\right)+\cdots + {\mathcal T}^a_{q}\left(\rho_{A_1|A_n}\right).
\label{n-1poly}
\end{align}
Thus Inequalities~(\ref{polymixed1}) and (\ref{n-1poly}) imply the polygamy
inequality of multi-party entanglement in terms of TEoA in (\ref{Tqpolyn}).
\end{proof}

Due to the relation between TEoA and EoA in Eq.~(\ref{TsallistoEoA}), Tsalli-$q$
polygamy inequality in (\ref{Tqpolyn}) is reduced to EoA-based polygamy inequality in (\ref{EoApoly}) for $q=1$.
As the quantum mutual information is subadditive for
ccq-states (Appendix~\ref{ap2}), Theorem~\ref{thm: Tqpolyn} is true without the subadditivity condition for $q=1$,
which encapsulates the results in \cite{KimGP}.


\section{Conclusion}
\label{Conclusion}

We have established a unified view to polygamy inequalities of multi-party quantum entanglement in arbitrary dimensions
using Tsallis-$q$ entropy. We have provided a one-parameter class of polygamy inequalities in multi-party quantum systems in terms of TEoA, which
provides an interpolation among various polygamy inequalities of multi-party quantum entanglement.

We have further provided one-parameter generalizations of Holevo quantity, UE and quantum mutual information.
By investigating the properties of the generalized quantum correlations related with four-party ccq-states,
we have provided a sufficient condition, on which the Tsallis-$q$ polygamy inequality holds in multi-party
quantum systems of arbitrary dimensions. We have also shown that the sufficient condition is guaranteed
for $q=1$, which is the case that Tsallis-$q$ polygamy inequality is reduced to the general polygamy inequality based on EoA.
Thus our results encapsulate the known results of EoA-based general polygamy inequality in a unified view in terms of Tsallis-$q$ entropy.

Based on the concept of $q$-expectation, our results provide one-parameter classes of various quantum correlations as well as their properties,
which are useful methods in establishing general polygamy of multi-party entanglement in arbitrary dimensions.
Noting the importance of the study on multi-party quantum entanglement,
especially in higher-dimensional systems more than qubits, our result can provide a rich reference for future
work to understand the nature of multi-party quantum entanglement.

\section*{Acknowledgments}
This research was supported by Basic Science Research Program through the National Research Foundation of Korea(NRF)
funded by the Ministry of Education, Science and Technology(NRF-2014R1A1A2056678).


\clearpage
\setcounter{equation}{0}
\begin{widetext}
\setcounter{page}{1}
\appendix

\section{Tsallis-$q$ mutual entropy of ccq-states}
\label{ap1}
Here, we provide the detail calculation of the Tsallis-$q$ mutual entropies of the ccq-state
$\Omega_{XAB}$ in Eq.~(\ref{IqXYAB2}) as well as the reduced density matrices $\Omega_{XAB}$ and $\Omega_{YAB}$
in Eqs.~(\ref{IqXAB2}) and (\ref{IqYAB2}).
Let us first consider $\Omega_{XAB}$.
From the definition of Tsallis-$q$ mutual entropy in
Eq.~(\ref{eq: qmutul}), we have
\begin{equation}
I_q\left(\Omega_{X:AB}\right)=S_q\left(\Omega_X\right)+S_q\left(\Omega_{AB}\right)-S_q\left(\Omega_{XAB}\right),
\label{IqXAB1}
\end{equation}
where Eq.~(\ref{X_Y}) implies that
\begin{equation}
S_q\left(\Omega_X\right)=S_q\left(\frac{I_{X}}{d}\right)=\frac{d^{1-q}-1}{1-q}.
\label{SqX}
\end{equation}

From Eq.~(\ref{AB_XY}), we also have
\begin{eqnarray}
S_q\left(\Omega_{AB}\right)&=&S_q\left(\rho_A\otimes\frac{I_B}{d}\right)\nonumber\\
&=&S_q\left(\rho_A\right)+S_q\left(\frac{I_B}{d}\right)+(1-q)S_q\left(\rho_A\right)S_q\left(\frac{I_B}{d}\right)\nonumber\\
&=&\frac{d^{1-q}-1}{1-q}+d^{1-q}S_q\left(\rho_A\right)
\label{SqAB}
\end{eqnarray}
where the second equality is due to the pseudoadditivity of Tsallis-$q$ entropy in Eq.~(\ref{pseudoadd}).
For $S_q\left(\Omega_{XAB}\right)$,
the joint entropy theorem in Lemma~\ref{Lem: qjoint} implies that
\begin{eqnarray}
S_q\left(\Omega_{XAB}\right)&=&H_q\left(\mathbf{I_d}\right)+\sum_{x=0}^{d-1}\frac{1}{d^q}S_q\left(\sum_{i}\sigma_A^i \otimes \lambda_i\ket{e_i}_B\bra{e_i}\right)\nonumber\\
&=&H_q\left(\mathbf{I_d}\right)+d^{1-q}\left[H_q\left(\mathbf{\Lambda}\right)+\sum_i \lambda_i^q S_q\left(\sigma_A^i\right)\right]
\label{SqXABj}
\end{eqnarray}
where $\mathbf{I_d}=\{1/d, \cdots, 1/d\}$ is the uniform probability distribution and $\mathbf{\Lambda}=\{\lambda_i\}_i$ is the spectrum of $\rho_B$.

Due to the relation
\begin{equation}
H_q\left(\mathbf{I_d}\right)=\frac{d^{1-q}-1}{1-q}=S_q\left(\frac{I_B}{d}\right)
\label{tcq1}
\end{equation}
and
\begin{equation}
H_q\left(\mathbf{\Lambda}\right)=S_q\left(\rho_B\right),
\label{tcq2}
\end{equation}
Eqs.~(\ref{SqX}), (\ref{SqAB}) and (\ref{SqXABj}) lead us to
\begin{eqnarray}
I_q\left(\Omega_{X:AB}\right)&=&\frac{d^{1-q}-1}{1-q}+d^{1-q}S_q\left(\rho_A\right)
-d^{1-q}\left[H_q\left(\mathbf{\Lambda}\right)+\sum_i \lambda_i^q S_q\left(\sigma_A^i\right)\right]\nonumber\\
&=&\frac{d^{1-q}-1}{1-q}-d^{1-q}S_q\left(\rho_B\right)+d^{1-q}\chi_q(\mathcal E_0),
\label{IqXAB2a}
\end{eqnarray}
where $\chi_q(\mathcal E_0)$ is the Tsallis-$q$ difference of the induced ensemble $\mathcal E_0$ in Eq.~(\ref{chi0}).

For the Tsallis-$q$ mutual entropy of $\Omega_{YAB}$, we have
\begin{eqnarray}
I_q\left(\Omega_{Y:AB}\right)&=&S_q\left(\Omega_Y\right)+S_q\left(\Omega_{AB}\right)-S_q\left(\Omega_{YAB}\right)\nonumber\\
&=&2\frac{d^{1-q}-1}{1-q}+d^{1-q}S_q\left(\rho_A\right)-S_q\left(\Omega_{YAB}\right).
\label{IqYAB1}
\end{eqnarray}
Because
\begin{eqnarray}
S_q\left(\Omega_{YAB}\right)&=&H_q\left(\mathbf{I_d}\right)+\sum_{y=0}^{d-1}\frac{1}{d^q}S_q\left(\sum_{j}\tau_A^i \otimes \frac{1}{d}|\tilde e_j \rangle_B \langle \tilde
e_j|\right)\nonumber\\
&=&(1+d^{1-q})H_q\left(\mathbf{I_d}\right)+d^{1-q}\sum_j\frac{1}{d^q}S_q\left(\tau_A^j\right),
\label{SqYABj}
\end{eqnarray}
where the second equality is due to the joint entropy theorem in Lemma~\ref{Lem: qjoint}, Eqs.~(\ref{IqYAB1}) and (\ref{SqYABj}) lead us to
\begin{equation}
I_q\left(\Omega_{Y:AB}\right)=(1-d^{1-q})\frac{d^{1-q}-1}{1-q}+d^{1-q}\chi_q(\mathcal E_1),
\label{IqYAB2a}
\end{equation}
where $\chi_q(\mathcal E_1)$ is the Tsallis-$q$ difference of the induced ensemble $\mathcal E_1$ in Eq.~(\ref{chi1}).

For the Tsallis-$q$ mutual entropy of $\Omega_{XYAB}$, we have
\begin{equation}
I_q\left(\Omega_{XY:AB}\right)=S_q\left(\Omega_{XY}\right)+S_q\left(\Omega_{AB}\right)-S_q\left(\Omega_{XYAB}\right),
\label{IqXYAB1}
\end{equation}
where Eqs.~(\ref{AB_XY}) imply that
\begin{equation}
S_q\left(\Omega_{XY}\right)=S_q\left(\frac{I_{XY}}{d^2}\right)=H_q\left(\mathbf{I_{d^2}}\right),
\label{SqXY}
\end{equation}
for the uniform probability distribution $\mathbf{I_{d^2}}=\{1/d^2, \cdots, 1/d^2\}$.
Moreover, from the the joint entropy theorem in Lemma~\ref{Lem: qjoint}, we have
\begin{equation}
S_q\left(\Omega_{XYAB}\right)=H_q\left(\mathbf{I_{d^2}}\right)+\sum_{x,y}\frac{1}{d^{2q}}S_q\left(\rho_{AB}\right),
\label{SqXYAB1}
\end{equation}
therefore Eq.~(\ref{IqXYAB1}) together with Eqs.~(\ref{SqAB}), (\ref{SqXY}) and (\ref{SqXYAB1}) lead us to
\begin{equation}
I_q\left(\Omega_{XY:AB}\right)=\frac{d^{1-q}-1}{1-q}+d^{1-q}S_q\left(\rho_A\right)-d^{2(1-q)}S_q\left(\rho_{AB}\right).
\label{IqXYAB2a}
\end{equation}

\section{Subadditivity of quantum mutual information for ccq-states}
\label{ap2}
Here we provide a detail proof that the quantum mutual information in Eq.~(\ref{eq: mutul})
is subadditive for general ccq-states of the form
\begin{equation}
  \Gamma_{XYAB}=\frac 1{d^2}\sum_{x,y=0}^{d-1}\ket{x}_X
  \bra{x}\otimes\ket{y}_Y\bra{y}\otimes\sigma^{xy}_{AB},
\label{gXYAB}
\end{equation}
which has the ccq-state in Eq.~(\ref{XYAB}) as a special case.
Then the subadditivity of quantum mutual information for $\Gamma_{XYAB}$ in Eq.~(\ref{gXYAB})
is equivalent to the nonnegativity
\begin{equation}
I\left(\Gamma_{XY:AB}\right)-I\left(\Gamma_{X:AB}\right)-I\left(\Gamma_{Y:AB}\right)\geq 0.
\label{nonneg1}
\end{equation}

Let us first consider the mutual information
\begin{eqnarray}
I\left(\Gamma_{XY:AB}\right)=S\left(\Gamma_{XY}\right)+S\left(\Gamma_{AB}\right)-S\left(\Gamma_{XYAB}\right).
\label{0mut1234}
\end{eqnarray}
Due to the joint entropy theorem in Eq.~(\ref{eq: joint}), the von Neumann entropy of $\Gamma_{XYAB}$ is
\begin{eqnarray}
S\left(\Gamma_{XYAB}\right)=H\left(\mathbf{I_{d^2}}\right)+\sum_{x,y}\frac{1}{d^2}S\left( \sigma^{xy}\right)
=2\log d+\sum_{x,y}\frac{1}{d^2}S\left( \sigma^{xy}\right).
\label{vonXYAB}
\end{eqnarray}
for the uniform probability distribution $\mathbf{I_{d^2}}=\{1/d^2, \cdots, 1/d^2\}$.

Because the reduced density matrices
\begin{equation}
\Gamma_{XY}=\frac 1{d^2}\sum_{x,y=0}^{d-1}\ket{x}_X
\bra{x}\otimes\ket{y}_Y\bra{y}
\label{redu4}
\end{equation}
is a $d^2$-dimensional maximally mixed state, its von Neumann entropy is
\begin{equation}
S\left(\Gamma_{XY}\right)=S\left(\frac{I_{XY}}{d^2}\right)=
2\log d.
\label{vonXY}
\end{equation}
Thus, together with the reduced density matrix
\begin{equation}
\Gamma_{AB}=\frac 1{d^2}\sum_{x,y=0}^{d-1}\sigma^{xy}_{AB},
\label{redu5}
\end{equation}
Eqs.~(\ref{vonXYAB}), (\ref{vonXY}) imply
\begin{eqnarray}
I\left(\Gamma_{XY:AB}\right)&=&S\left(\Gamma_{XY}\right)+S\left(\Gamma_{AB}\right)-S\left(\Gamma_{XYAB}\right)\nonumber\\
&=&S\left(\frac 1{d^2}\sum_{x,y=0}^{d-1}\sigma^{xy}_{AB}\right)-\sum_{x,y}\frac{1}{d^2}S\left( \sigma^{xy}_{AB}\right).
\label{mut1234}
\end{eqnarray}

Similarly, for the reduced density matrices
\begin{equation}
\Gamma_{XAB}=\frac {1}{d}\sum_{x=0}^{d-1}\left(\ket{x}_X
  \bra{x}\otimes\sum_{y=0}^{d-1}\sigma^{xy}_{AB}\right)
\label{redu2}
\end{equation}
and
\begin{equation}
\Gamma_{YAB}=\frac {1}{d}\sum_{y=0}^{d-1}\left(\ket{y}_Y
  \bra{y}\otimes\sum_{x=0}^{d-1}\sigma^{xy}_{AB}\right),
\label{redu3}
\end{equation}
we have
\begin{equation}
I\left(\Gamma_{X:AB}\right)=S\left(\frac 1{d^2}\sum_{x,y=0}^{d-1}\sigma^{xy}_{AB}\right)-\sum_{x=0}^{d-1}\frac{1}{d}S\left(\sum_{y=0}^{d-1}\sigma^{xy}_{AB}\right)
\label{mut134}
\end{equation}
and
\begin{equation}
I\left(\Gamma_{Y:AB}\right)=S\left(\frac 1{d^2}\sum_{x,y=0}^{d-1}\sigma^{xy}_{AB}\right)-\sum_{y=0}^{d-1}\frac{1}{d}S\left(\sum_{x=0}^{d-1}\sigma^{xy}_{AB}\right).
\label{mut234}
\end{equation}
From Eqs.~(\ref{mut1234}), (\ref{mut134}) and (\ref{mut234}), the nonnegativity in (\ref{nonneg1}) can be rephrased as
\begin{eqnarray}
\sum_{y=0}^{d-1}\frac{1}{d}\left[S\left(\sum_{x=0}^{d-1}\frac{1}{d}\sigma^{xy}_{AB}\right)-\sum_{x=0}^{d-1}\frac{1}{d}S\left(\sigma^{xy}_{AB}\right)\right]
\geq S\left(\sum_{x,y=0}^{d-1}\frac {1}{d^2}\sigma^{xy}_{AB}\right)-\sum_{x=0}^{d-1}\frac{1}{d}S\left(\sum_{y=0}^{d-1}\frac{1}{d}\sigma^{xy}_{AB}\right).
\label{nonneg2}
\end{eqnarray}

Now, let us denote
\begin{equation}
\rho=\sum_{x,y=0}^{d-1}\frac 1{d^2}\sigma^{xy}_{AB}
\label{rhoxy}
\end{equation}
and consider a probability ensemble of $\rho$
\begin{equation}
\mathcal E_x=\{\frac{1}{d}, \rho^x\},~\rho^x=\sum_{y=0}^{d-1}\frac{1}{d}\sigma^{xy}_{AB}
\label{ensrho}
\end{equation}
for each $x$.
Then the right-hand side of Inequality~(\ref{nonneg2}) is the Holevo quantity of $\rho$ with respect to the ensemble $\mathcal E_x$,
\begin{equation}
\chi \left(\mathcal E_x\right)=S\left(\rho\right)-\sum_{x=0}^{d-1}\frac{1}{d}S\left(\rho^x\right),
\label{rhoxyhole}
\end{equation}
which also has an alternative representation
\begin{equation}
\chi \left(\mathcal E_x\right)=\sum_{x=0}^{d-1}\frac{1}{d}S\left(\rho^x \| \rho \right)
\label{chirela}
\end{equation}
in terms of the quantum relative entropy
\begin{equation}
S\left(\rho \|\sigma \right)=\T\rho\log\rho-\T\rho\log\sigma.
\label{relative}
\end{equation}

By denoting
\begin{equation}
\rho^y=\sum_{x=0}^{d-1}\frac {1}{d}\sigma^{xy}_{AB}
\label{rhoy}
\end{equation}
and considering a probability ensemble of $\rho^y$
\begin{equation}
\mathcal E_y=\{\frac{1}{d}, \sigma^{xy}_{AB}\}
\label{ensrhoy}
\end{equation}
for each $y$, a similar argument enables us to rephrase the left-hand side of Inequality~(\ref{nonneg2}) as
\begin{equation}
\sum_{y=0}^{d-1}\frac{1}{d}\left[\sum_{x=0}^{d-1}\frac{1}{d}S\left(\sigma^{xy}_{AB} \| \rho^y \right)\right].
\label{lefeq}
\end{equation}

From Inequality~(\ref{nonneg2}) together with Eqs.~(\ref{chirela}) and (\ref{lefeq}), the nonnegativity in (\ref{nonneg1})
is now equivalent to
\begin{equation}
\sum_{x, y=0}^{d-1}\frac{1}{d^2} S\left(\sigma^{xy}_{AB} \| \rho^y \right) \geq \sum_{x=0}^{d-1}\frac{1}{d}S\left(\rho^x \| \rho \right)
\label{nonneg4}
\end{equation}
which is always true due to the joint convexity of quantum relative entropy
\begin{equation}
\sum_i p_i S\left(\rho_i \| \sigma_i \right) \geq S\left( \sum_i p_i \rho_i \| \sum_i p_i \sigma_i \right),
 \label{jconvex}
\end{equation}
for quantum states $\rho_i$'s, $\sigma_i$'s and a probability distribution $\{p_i \}$.
\end{widetext}
\end{document}